\documentclass[runningheads]{llncs}
\usepackage[T1]{fontenc}
\usepackage{amsfonts}
\usepackage{amsmath}
\usepackage{amssymb}
\usepackage{csquotes}
\usepackage[colorlinks=true, allcolors=blue]{hyperref}
\begin{document}
\title{Cryptanalysis of the Legendre Pseudorandom Function over Extension Fields}
\author{Daksh Pandey}
\authorrunning{Daksh P.}
\institute{Indian Institute of Technology, Roorkee}
\maketitle              
\begin{abstract}
The Legendre Pseudorandom Function (PRF) is a highly efficient cryptographic primitive built upon the Legendre symbol, recently gaining significant traction due to its low multiplicative complexity in Multi-Party Computation (MPC) and Zero-Knowledge Proof (ZKP) protocols. While the security of the Legendre PRF over prime fields $\mathbb{F}_p$ has been extensively studied, recent interest has shifted toward its instantiation over extension fields $\mathbb{F}_{p^r}$. In this paper, we present the first comprehensive cryptanalysis of the single-degree Legendre PRF operating over $\mathbb{F}_{p^r}$.

First, we analyze the structural properties of polynomial input encoding under a standard passive threat model (sequential additive counter queries). We demonstrate that while the absence of polynomial carry-overs causes an asynchronous "no-carry fracture" that successfully neutralizes classical sliding-window collision attacks, the fracture itself is deterministically periodic. By introducing a novel "Differential Signature" bucketing technique, we show that an attacker can systematically group fractured sequences by their structural shapes. This allows an adversary to bypass the no-carry defense entirely and passively recover the secret key in $\mathcal{O}(U \cdot p^r/M)$ operations, where $U$ is the unicity distance.

Second, we evaluate the PRF under an active Chosen-Query threat model. We demonstrate that an adversary can completely circumvent the additive fracture by evaluating the PRF along a geometric sequence generated by a primitive polynomial. This query structure invokes strict multiplicative homomorphism over $\mathbb{F}_{p^r}^*$, permitting the direct generalization of the state-of-the-art table collision attack to extract the key in purely $\mathcal{O}(p^r/M)$ operations. Finally, we evaluate the cryptographic boundaries of these attacks, formally establishing the absolute necessity of higher-degree key variants ($d \ge 2$) to achieve exponential security against structural reduction in extension fields.
\keywords{Legendre PRF \and Cryptanalysis \and Extension Fields \and Collision Attack \and Pseudorandom Functions.}
\end{abstract}
\section{Introduction and Mathematical Foundations}
The Legendre Pseudorandom Function (PRF) was originally proposed as a highly efficient primitive designed to minimize multiplicative complexity in Multi-Party Computation (MPC) and symmetric Zero-Knowledge Proof (ZKP) systems \cite{grassi2016}. While the baseline construction operates over prime fields $\mathbb{F}_p$ \cite{damgard1988}, optimizing throughput in specific cryptographic protocols often necessitates shifting operations to extension fields $\mathbb{F}_{p^r}.$ In this section, we formalize the algebraic foundations required to construct and evaluate the Legendre PRF over these extension fields.

\subsection{The Standard Legendre PRF ($d = 1$)}
In its original, single-degree instantiation over a prime field $\mathbb{F}_p$ (where $p$ is an odd prime), the Legendre PRF is evaluated using a secret key $K \in \mathbb{F}_p$ and an input $x \in \mathbb{F}_p.$ The function maps the result of the standard Legendre symbol to a single binary bit:   
\begin{equation*}
    L_K(x) = \left(\frac{x + K}{p}\right)
\end{equation*}

where the mathematical outputs $+1$ (quadratic residue) and $-1$ (quadratic non-residue) are mapped to the binary values $0$ and $1$, respectively. The simplicity of the linear shift $(x + K)$ is the primary source of the function's MPC efficiency, but it also forms the structural basis for sliding-window collision attacks \cite{khovratovich2019,beullens2019}. 

\subsection{Extension Field Construction ($\mathbb{F}_{p^r}$)}
To port this function to an extension field, we transition from scalar integer arithmetic to polynomial arithmetic. The field $\mathbb{F}_{p^r}$ is defined by a base prime $p$ and an extension degree $r$.

The elements of this field are polynomials with a maximum degree of $r - 1$, taking the form: 
\begin{equation*}
    A(x) = a_{r-1}x^{r-1} + \cdots + a_1x + a_0
\end{equation*}
where each coefficient $a_i \in \mathbb{F}_p.$

To ensure the field is closed under multiplication, operations are generally performed modulo an irreducible polynomial $I(x)$ of degree exactly $r$. Therefore, for any two elements $A(x), B(x) \in \mathbb{F}_{p^r},$ field addition and multiplication are defined as:
\begin{equation*}
    A(x) \oplus B(x) \equiv (A(x) + B(x)) \pmod p
\end{equation*}

\begin{equation*}
    A(x)\otimes B(x) \equiv (A(x)\cdot B(x)) \pmod {I(x), p}
\end{equation*}

Crucially, polynomial addition over $\mathbb{F}_p$ operates strictly coefficient-wise with no \enquote{carry-over} to higher degrees, a property that fundamentally alters the behavior of sequential additive queries.

\subsection{The Generalized Legendre Symbol}
To evaluate the PRF, we must generalize the Legendre symbol for $\mathbb{F}_{p^r}.$ An element $A(x) \in \mathbb{F}^*_{p^r}$ is considered a quadratic residue if there exists some element $Y(x)$ such that $Y(x)^2 \equiv A(x) \pmod {I(x), p}$.

By Euler's Criterion generalized for finite fields \cite{lidl1997}, the Legendre symbol of a polynomial $A(x)$ can be deterministically computed via exponentiation:
\begin{equation*}
    L(A(x)) \equiv A(x)^{\frac{p^r - 1}{2}} \pmod{ I(x), p}
\end{equation*}

This operation yields $+1$ if $A(x)$ is a quadratic residue, and $-1$ if it is a non-residue. For the PRF output, we maintain the standard binary mapping:
\begin{equation*}
\mathrm{PRF}(A) =
\begin{cases}
0 & \text{if } L(A(x)) \in \{0,1\} \\
1 & \text{if } L(A(x)) = -1
\end{cases}
\end{equation*}

\section{Implementation Constraints and Statistical Security}
Before evaluating the structural vulnerabilities of the Legendre PRF, we must rigorously define the mapping between sequential integer queries and the extension field $\mathbb{F}_{p^r}$, and formally bound the statistical pseudorandomness of the resulting keystream.

\subsection{Polynomial Input Encoding and Periodicity} \label{Section 2.1}
In standard Counter Mode, the server generates a keystream by evaluating the PRF on a sequentially incrementing integer counter $n \in \mathbb{N}$. To evaluate the function over an extension field, this integer must be deterministically injected into the polynomial ring.

\begin{definition}[Base-$p$ Polynomial Encoding] \label{def:encoding}
Let the extension field $\mathbb{F}_{p^r}$ consist of polynomials of maximum degree $r-1$. We define the encoding map $\phi: \mathbb{Z}_{p^r} \to \mathbb{F}_{p^r}$ such that for an integer $n$ with base-$p$ expansion $n = \sum_{i=0}^{r-1} d_i p^i$, the corresponding field element is:
\begin{equation*}
    \phi(n) = X_n(x) = d_{r - 1}x^{r - 1} + \cdots + d_1x + d_0
\end{equation*}
where $d_i \in \mathbb{F}_p$.
\end{definition}

This encoding enforces a strict physical boundary on the allowable query space, as the map $\phi$ is only bijective over the finite interval $[0, p^r - 1]$.

\begin{proposition}[Periodicity Constraint]
Let $M$ be the total number of sequential queries. If $M \geq p^r$, the input sequence $X_n \pmod{p^r}$ silently wraps, rendering the resulting PRF keystream perfectly periodic with a maximum period of $p^r$.
\end{proposition}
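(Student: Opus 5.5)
The plan is to make the wrap-around explicit by extending the encoding of Definition~\ref{def:encoding} to all counters $n \in \mathbb{N}$, and then show that the keystream depends on $n$ only through its residue modulo $p^r$. An implementation holding an $r$-digit base-$p$ register can only read off the digits $d_0,\dots,d_{r-1}$ of $n$. So the effective input is $\phi(n \bmod p^r)$: any higher digits $d_i$ with $i \ge r$ are discarded. I will take this as the precise meaning of ``$X_n \pmod{p^r}$ silently wraps''.

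The first step is to record that $n \mapsto \phi(n \bmod p^r)$ factors through $\mathbb{Z}/p^r\mathbb{Z}$. By uniqueness of base-$p$ expansion, $n \equiv n' \pmod{p^r}$ holds exactly when $n$ and $n'$ agree in their lowest $r$ digits. Therefore $X_{n+p^r}(x) = X_n(x)$ as polynomials over $\mathbb{F}_p$ for every $n \ge 0$.

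The second step composes with the PRF. The keystream bit is $b_n = \mathrm{PRF}(X_n(x) + K(x))$, computed through the generalized Euler criterion $L(A) = A^{(p^r-1)/2} \bmod (I(x),p)$. This is a deterministic function of the field element $X_n + K$ for the fixed key $K$. It follows that $b_{n+p^r} = b_n$ for all $n$, so the stream is periodic with period dividing $p^r$. When $M \ge p^r$, the window $b_0,\dots,b_{M-1}$ contains at least one full period followed by an exact repeat of its prefix $b_0,\dots,b_{M-p^r-1}$. This is the claimed perfect periodicity.

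The third step is the word ``maximum''. The minimal period $T$ of $(b_n)$ must divide $p^r$, since $p^r$ is a period and the minimal period of a periodic sequence divides every period. Hence $T \le p^r$. Conversely, on the interval $[0,p^r-1]$ the map $\phi$ is a bijection, so the inputs $X_0,\dots,X_{p^r-1}$ are pairwise distinct and no input repeats earlier than $p^r$. I expect the only real subtlety to lie here. Distinct inputs do not force distinct output bits, so the bit stream could in principle have a smaller period $p^j$. I would therefore state the bound as an upper bound: the period is at most $p^r$, and $p^r$ is the exact period of the input sequence. For the output, I would only note that a period $p^j < p^r$ would force $\chi(X+K)$ to be invariant under the additive shift by $p^j$'s encoding, where $\chi$ denotes the quadratic character. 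That invariance is exactly the kind of structure the later sections exploit, so I would leave minimality of the output period outside the claim.
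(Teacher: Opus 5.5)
The paper gives no proof of this proposition beyond the preceding remark that $\phi$ is bijective only on $[0,p^r-1]$. Your argument is a correct formalization of exactly that reasoning: the counter is effectively reduced modulo $p^r$, so $X_{n+p^r}=X_n$, hence $b_{n+p^r}=b_n$ and the minimal period divides $p^r$, with ``maximum period'' rightly read as an upper bound. The minimality you deliberately leave open also holds, by a short count: over one full cycle $X_n+K$ runs through all of $\mathbb{F}_{p^r}$ and produces $(p^r+1)/2$ zero bits, which is not divisible by $p$ since $p^r+1\equiv 1 \pmod p$, whereas a period $p^{r-1}$ (implied by any period $p^j$ with $j<r$) would split the cycle into $p$ identical blocks and make that count divisible by $p$, so the keystream's exact period is $p^r$.
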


Consequently, for cryptographic viability, the extension field parameters must be selected such that $p^r \gg M$, ensuring the adversary never observes a periodic cycle.

\subsection{Pseudorandomness and the Extension Field Weil Bound}
To ensure the PRF resists statistical distinguishing attacks, the generated bitstream must approximate a uniform distribution. Over prime fields, the distribution of Legendre sequences is bounded by the Weil theorem for multiplicative characters \cite{weil1948}. We extend this bound to the generalized Legendre symbol over $\mathbb{F}_{p^r}$.

\begin{lemma}[Weil Bound for Multiplicative Characters over $\mathbb{F}_{p^r}$] \label{lemma:weil}
Let $\chi$ be the quadratic character of $\mathbb{F}_{p^r}$ (i.e., the generalized Legendre symbol $L$), and let $f(T) \in \mathbb{F}_{p^r}[T]$ be a polynomial of degree $d \geq 1$ that is not a perfect square in $\mathbb{F}_{p^r}[T]$. Then the character sum over $\mathbb{F}_{p^r}$ satisfies:
\begin{equation*}
    \left| \sum_{A \in \mathbb{F}_{p^r}} \chi(f(A)) \right| \leq (d - 1)\sqrt{p^r}
\end{equation*}
where the sum excludes the $d$ roots of $f$ (at which $\chi$ vanishes), and the bound holds for squarefree $f$ of degree $d \geq 2$. For the PRF's linear input $f(A) = A + K$ (degree $d = 1$), the map $A \mapsto A + K$ is a bijection on $\mathbb{F}_{p^r}$, so the character sum $\sum_{A} L(A + K)$ equals $\sum_{A \in \mathbb{F}_{p^r}} L(A) = 0$ exactly, by the orthogonality of characters.
\end{lemma}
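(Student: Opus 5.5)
The lemma has two parts, and I would handle them separately. The degree-one statement is elementary and self-contained. The general bound for $d \ge 2$ is the classical Weil bound, which I would cite rather than reprove.

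\textbf{The case $d=1$.} Here $f(A)=A+K$, and I would prove the exact identity directly. Translation $\tau_K : A \mapsto A+K$ is a permutation of $\mathbb{F}_{p^r}$, since its inverse is $A\mapsto A-K$. Reindexing the sum with $B=A+K$ therefore gives
\begin{equation*}
\sum_{A\in\mathbb{F}_{p^r}} L(A+K)=\sum_{B\in\mathbb{F}_{p^r}} L(B).
\end{equation*}
I would use the convention $L(0)=0$, which is what Euler's criterion yields. The right-hand side then reduces to the sum of $\chi$ over $\mathbb{F}_{p^r}^*$.

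The multiplicative group $\mathbb{F}_{p^r}^*$ is cyclic of even order $p^r-1$. Take a generator $g$. Then $\chi(g^j)=(-1)^j$, because $g^{(p^r-1)/2}=-1$ by Euler's criterion as recalled in the paper. So $\chi$ is a nontrivial homomorphism onto $\{\pm1\}$, and exactly half of the elements of $\mathbb{F}_{p^r}^*$ are squares.

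This gives the vanishing by the standard orthogonality argument. Choose a non-residue $c$; multiplication by $c$ permutes $\mathbb{F}_{p^r}^*$. Hence $S=\sum_{B\neq0}\chi(B)=\sum_{B\neq 0}\chi(cB)=\chi(c)S=-S$, and $S=0$ because $p$ is odd. This also matches the claimed bound $(d-1)\sqrt{p^r}=0$ at $d=1$, so the inequality holds with equality.

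\textbf{The case $d\ge 2$.} For squarefree (more generally, non-square) $f$ of degree $d$, I would invoke Weil's theorem on multiplicative character sums \cite{weil1948}, in the form given in \cite{lidl1997}. That theorem is stated over an arbitrary finite field $\mathbb{F}_q$, so taking $q=p^r$ requires no extension argument.

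Two bookkeeping points need checking.
\begin{itemize}
\item The hypothesis in that form is that $f$ is not of the form $c\,g(T)^m$ with $m=\operatorname{ord}\chi=2$. This is implied by squarefreeness when $d\ge 1$.
\item The count of distinct roots enters as $(d-1)$. Including or excluding the roots of $f$ does not change the sum, because $\chi(0)=0$.
\end{itemize}

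\textbf{Main obstacle.} The only step with real mathematical depth is Weil's bound. Proving it from scratch would require the Riemann hypothesis for curves over finite fields, applied to $y^2=f(x)$, or else Stepanov's method. I would not attempt that here and would treat it as a citation.

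The genuine care point is the statement's phrasing. The bound as written uses ``$d$ roots''. For a non-squarefree non-square $f$, the correct constant is one less than the number of distinct roots, not $d-1$. I would therefore state the result under the squarefree hypothesis, as the lemma does, so that the cited theorem applies verbatim.
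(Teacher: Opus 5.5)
Your proposal is correct and takes the same route as the paper: $d=1$ via the translation bijection $A\mapsto A+K$ and orthogonality of the nontrivial quadratic character, and a citation of Weil's theorem for $d\ge 2$. One correction to your closing remark: $(d-1)\sqrt{p^r}$ with $d=\deg f$ is still a valid, merely non-sharp, bound for every $f$ not of the form $c\,g(T)^2$, because the number of distinct roots is at most $d$. The real defect in the lemma's general phrasing is that ``not a perfect square in $\mathbb{F}_{p^r}[T]$'' is too weak a hypothesis; for example, $f=cT^2$ with $c$ a non-residue gives $\bigl|\sum_A\chi(f(A))\bigr|=p^r-1$, and your squarefree restriction together with your $c\,g(T)^2$ bullet already excludes this case.
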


Using Lemma \ref{lemma:weil}, we can formally bound the expected frequency of any sliding-window pattern within the keystream, proving its resistance to statistical biases.

\begin{theorem}[Statistical Indistinguishability]
Let $S$ be a PRF keystream generated over $\mathbb{F}_{p^r}$. If an adversary analyzes sliding-window patterns of length $l$, the absolute frequency of any specific $l$-bit binary sequence deviates from the uniform expectation by at most $\mathcal{O}(\sqrt{p^r})$. Specifically, the expected occurrence $N_l$ of the pattern in the query space satisfies:
\begin{equation*}
    N_l = \frac{p^r}{2^l} + \mathcal{O}(\sqrt{p^r})
\end{equation*}
The error term $\mathcal{O}(\sqrt{p^r})$ follows from applying the Weil bound to degree-$d$ polynomial combinations ($d \geq 2$) that arise when counting joint occurrences of $l$-bit patterns; for $l \geq 2$, the relevant character sums involve polynomials of degree $l \geq 2$, to which the bound $(d-1)\sqrt{p^r} \leq (l-1)\sqrt{p^r} = \mathcal{O}(\sqrt{p^r})$ applies correctly.
\end{theorem}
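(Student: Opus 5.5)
We follow the classical indicator-function approach for Legendre sequences and transport it to $\mathbb{F}_{p^r}$ through the encoding $\phi$ of Definition~\ref{def:encoding}.

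\textbf{Setup.} Fix a window of $l$ consecutive queries. For the counters $n, n+1, \dots, n+l-1$ the field inputs are $\phi(n), \phi(n+1), \dots, \phi(n+l-1)$. Whenever no base-$p$ carry occurs inside the window, these can be written as $A, A+1, \dots, A+l-1$ with $A=\phi(n)$, since $1\in\mathbb{F}_p\subset\mathbb{F}_{p^r}$. Windows that do contain a carry number at most $\mathcal{O}(l\,p^{r-1})$. We either absorb them into the error term, or treat them the same way using the shifts $c_j=\phi(n+j)-\phi(n)$. These shifts are distinct field elements, and only $\mathcal{O}(l)$ distinct shift patterns occur.

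\textbf{Counting the pattern.} Fix a target bit pattern $b\in\{0,1\}^l$ and set $\epsilon_j=(-1)^{b_j}$. For $A$ avoiding the $l$ points $-K-c_j$, the indicator that the window starting at $A$ outputs $b$ is
\begin{equation*}
\prod_{j=0}^{l-1}\frac{1+\epsilon_j\,\chi(A+K+c_j)}{2}.
\end{equation*}
Sum this over $A\in\mathbb{F}_{p^r}$; the excluded points cost only $\mathcal{O}(l)$. Expanding the product, the empty subset contributes exactly $p^r/2^l$. Every nonempty subset $J\subseteq\{0,\dots,l-1\}$ contributes $2^{-l}\prod_{j\in J}\epsilon_j\sum_A\chi(f_J(A))$, where $f_J(T)=\prod_{j\in J}(T+K+c_j)$, using multiplicativity of $\chi$.

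\textbf{Bounding each term.} The $c_j$ are distinct, so $f_J$ is squarefree of degree $|J|$ and in particular not a square. If $|J|=1$, the bijection argument in Lemma~\ref{lemma:weil} makes the sum exactly $0$. If $|J|\ge 2$, Lemma~\ref{lemma:weil} bounds it by $(|J|-1)\sqrt{p^r}$. Summing over subsets gives
\begin{equation*}
\bigl|N_l - p^r/2^l\bigr| \le 2^{-l}\sum_{k=2}^{l}\binom{l}{k}(k-1)\sqrt{p^r}+\mathcal{O}(l) \le \tfrac{l}{2}\sqrt{p^r}+\mathcal{O}(l).
\end{equation*}
For fixed $l$ this is $\mathcal{O}(\sqrt{p^r})$, which is the claimed estimate.

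\textbf{Main obstacle: carry windows.} The hard step is handling windows that cross a carry. For those, the input differences are not the constants $0,1,\dots,l-1$, so the polynomial $f_J$ depends on where the carry falls. I would partition the query space by carry position inside the window. Each part has a fixed shift vector $(c_j)$ and is an affine subset of $\mathbb{F}_{p^r}$, so restricting to it requires an incomplete character sum over a subspace rather than over the whole field. I would first try to bound these parts trivially by their size, $\mathcal{O}(l\,p^{r-1})$. That is acceptable when $r=1$, but for $r\ge 2$ it exceeds $\sqrt{p^r}$. In that case the honest statement is
\begin{equation*}
N_l=\frac{p^r}{2^l}+\mathcal{O}\bigl(\sqrt{p^r}+l\,p^{r-1}\bigr).
\end{equation*}
Alternatively, one can interpret the theorem as counting windows of consecutive field-additive shifts $A, A+1, \dots, A+l-1$ over all $A\in\mathbb{F}_{p^r}$. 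Under that reading the carry issue disappears and the argument above gives exactly $\mathcal{O}(\sqrt{p^r})$. I would adopt this reading and state it explicitly.
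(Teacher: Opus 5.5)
Your main computation is the paper's argument, done more carefully. The paper gives no separate proof, only the remark inside the statement that the error comes from applying Lemma~\ref{lemma:weil} to the polynomial products arising from joint occurrences. Your expansion is that computation worked out: the empty subset gives $p^r/2^l$, singletons vanish exactly, and $|J|\ge 2$ is bounded by $(|J|-1)\sqrt{p^r}$ since $f_J$ is squarefree. The paper implicitly treats the window inputs as fixed shifts of one variable summed over all of $\mathbb{F}_{p^r}$, which is the reading you adopt, and it never addresses windows that straddle a carry. So your proof is complete under the paper's own implicit reading.

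For the literal counter-mode statement, however, you need not concede the weaker $\mathcal{O}(l\,p^{r-1})$ term. The missing idea is completing the incomplete sums over affine subspaces with additive characters.

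First, a correction: the shift vector depends on the carry depth $k$ as well as the carry position $t$. By Lemma~\ref{lemma:carry}, $c_j=j$ for $j\le t$ and $c_j=j+x+\cdots+x^k$ for $j>t$. So there are $(l-1)(r-1)+1$ patterns rather than $\mathcal{O}(l)$. Pattern $(t,k)$ occurs exactly for $A$ with $a_0=p-1-t$, $a_1=\cdots=a_{k-1}=p-1$, $a_k\neq p-1$. This set is a coset of an $\mathbb{F}_p$-subspace of dimension $r-k$ minus a coset of dimension $r-k-1$.

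For a coset $C=a+W$, write $\mathbf{1}_C(A)=\frac{|W|}{p^r}\sum_{b\in W^\perp}\psi(b(A-a))$. Here $\psi(z)=e^{2\pi i\,\mathrm{Tr}(z)/p}$, and $W^\perp$ is the annihilator of $W$ under the pairing $(b,w)\mapsto\mathrm{Tr}(bw)$, which has size $p^r/|W|$. Then $\sum_{A\in C}\chi(f_J(A))$ is an average of complete mixed sums $\sum_{A\in\mathbb{F}_{p^r}}\chi(f_J(A))\psi(bA)$. Each of these has absolute value at most $|J|\sqrt{p^r}$ by Weil's bound for mixed multiplicative--additive sums, which applies because $f_J$ is squarefree. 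Hence each coset sum is also at most $|J|\sqrt{p^r}$.

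It follows that each carry pattern contributes error $\mathcal{O}(l\sqrt{p^r})$. The no-carry set is $\mathbb{F}_{p^r}$ minus the $l-1$ cosets $\{a_0=p-1-t\}$, so it is handled the same way. Altogether $N_l=p^r/2^l+\mathcal{O}(l^2 r\sqrt{p^r})$ for the actual counter-mode windows (with $l\le p$), which is the claimed order for fixed $l,r$.

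Incidentally, your trivial bound already suffices for $r=2$, since there $l\,p^{r-1}=l\sqrt{p^r}$; it only fails for $r\ge 3$.
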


As the field size $p^r \to \infty$, the fractional error term bounded by the Weil theorem becomes negligible, and the distribution converges perfectly to the uniform distribution. Therefore, any successful cryptanalysis of the Legendre PRF over $\mathbb{F}_{p^r}$ cannot rely on statistical bias, but must instead exploit the underlying algebraic structure of the evaluation.

\section{Structural Analysis and Passive Cryptanalysis} \label{Section 3}

The security of the Legendre PRF against classical collision attacks \cite{khovratovich2019} relies strictly on the continuity of the input sequence differential. In this section, we formally evaluate the structural behavior of polynomial base-$p$ encoding and prove that the sequence can be compromised passively despite the failure of constant arithmetic progressions.

\subsection{The Additive Differential and Sequence Fracture}
In a passive threat model, the server generates a keystream $S = \{ L_K(X_n) \}_{n=1}^M$ by evaluating the PRF on a sequentially incrementing counter $n$, mapped to polynomials $X_n \in \mathbb{F}_{p^r}$ as defined in Section \ref{Section 2.1}. 

Classical table collision attacks over $\mathbb{F}_p$ necessitate a constant differential $\Delta_n = X_{n+1} - X_n = 1$. We demonstrate that the polynomial mapping over extension fields inherently fractures this progression.

\begin{lemma}[Asynchronous Polynomial Carry] \label{lemma:carry}
Let $X_n \in \mathbb{F}_{p^r}$ be the polynomial encoding of an integer $n$. The differential $\Delta_n = X_{n+1} \oplus X_n \pmod p$ is not constant, but rather takes values from the set
\begin{equation*}
    \left\{ \sum_{i=0}^{k} x^i \;\middle|\; 0 \leq k < r \right\},
\end{equation*}
determined strictly by the $p$-adic valuation of $n+1$.
\end{lemma}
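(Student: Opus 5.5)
The plan is a direct digit-by-digit computation using the base-$p$ encoding $\phi$ of Definition \ref{def:encoding}, reading the differential as the coefficient-wise difference $X_{n+1} - X_n$ in $\mathbb{F}_p[x]$. This is the sense required by the comparison with $\Delta_n = 1$ over $\mathbb{F}_p$; the symbol $\oplus$ in the statement should be taken to mean subtraction, and I will say so explicitly. I set $k = v_p(n+1)$ for $n+1 < p^r$. Writing $n = \sum_{i=0}^{r-1} d_i p^i$, the valuation condition says exactly that $d_0 = \cdots = d_{k-1} = p-1$ and $d_k \neq p-1$ (with $k \le r-1$).

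The first step is the usual integer carry rule. Adding $1$ to $n$ turns the $k$ trailing digits equal to $p-1$ into $0$. It raises $d_k$ to $d_k+1$, which is still at most $p-1$, so there is no further carry. All digits $d_i$ with $i > k$ stay the same. Under $\phi$ this gives
\begin{equation*}
X_{n+1} = \sum_{i>k} d_i x^i + (d_k+1)x^k .
\end{equation*}

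The second step subtracts coefficient-wise in $\mathbb{F}_p$, using that polynomial addition has no carry, as noted in the construction of $\mathbb{F}_{p^r}$. For $i > k$ the coefficient is $0$. At $i = k$ it is $(d_k+1) - d_k = 1$. For $i < k$ it is $0 - (p-1) \equiv 1 \pmod p$. Hence
\begin{equation*}
\Delta_n = \sum_{i=0}^{k} x^i ,
\end{equation*}
which lies in the stated set and depends only on $k = v_p(n+1)$. Every $k$ with $0 \le k < r$ actually occurs, for example at $n = p^k - 1$. So the differential takes at least two distinct values once $r \ge 2$, which shows it is not constant.

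The only delicate point is the boundary $n = p^r - 1$, where $n+1$ falls outside the domain on which $\phi$ is bijective. Here I will invoke the wrap-around convention of the Periodicity Constraint proposition, so that $X_{p^r} = X_0 = 0$. All $r$ digits of $n$ equal $p-1$, and the same computation gives $\Delta_n = \sum_{i=0}^{r-1} x^i$, which is still in the set. Statement-wise this means the relevant index is $k = \min(v_p(n+1), r-1)$, with $v_p$ taken on the reduced counter. Beyond stating this convention carefully, I expect no real obstacle.
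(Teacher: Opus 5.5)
Your proposal is correct and follows the paper's own argument: the same digit-by-digit carry analysis, where each reset digit contributes $0-(p-1)\equiv 1 \pmod p$ and the incremented digit contributes $+1$ at degree $k$, giving $\Delta_n=\sum_{i=0}^{k}x^i$. Your additional remarks make explicit points the paper leaves implicit: reading $\oplus$ as subtraction, the witnesses $n=p^k-1$ (which show non-constancy, and only when $r\ge 2$), and the wrap-around case $n=p^r-1$.
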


\begin{proof}
Let $n = \sum_{i=0}^{r-1} d_i p^i$ mapping to $X_n = \sum_{i=0}^{r-1} d_i x^i$. The standard integer addition $n + 1$ induces a carry cascading up to the $k$-th digit if and only if $d_i = p - 1$ for all $0 \leq i < k$ (and $d_k \neq p-1$). In that case, incrementing $n$ sets $d_i \to 0$ for $0 \leq i < k$ and $d_k \to d_k + 1$.

Because polynomial addition in $\mathbb{F}_{p^r}$ operates coefficient-wise modulo $p$ without algebraic carry-over, each reset $d_i = p - 1 \to 0$ contributes a coefficient difference of $0 - (p-1) \equiv 1 \pmod{p}$, and the increment $d_k \to d_k + 1$ contributes a difference of $+1$ at degree $k$. Therefore all coefficients $x^0, x^1, \ldots, x^k$ increase by $1$, yielding
\begin{equation*}
    \Delta_n = x^k + x^{k-1} + \cdots + x + 1 = \sum_{i=0}^{k} x^i.
\end{equation*}
When $k = 0$ (no carry, i.e.\ $d_0 \neq p-1$), only the constant coefficient increments and $\Delta_n = 1$, consistent with the formula for $k = 0$.
\end{proof}

Lemma \ref{lemma:carry} proves that the differential $\Delta_n$ takes at most $r$ distinct values, one for each possible carry depth $k \in \{0, \ldots, r-1\}$. Consequently, the input sequence fractures into disjointed arithmetic sub-blocks of length $p$, rendering the contiguous sliding-window reference required by Beullens et al.\ \cite{beullens2019} mathematically impossible.

\subsection{Passive Key Recovery via Differential Signatures}
While the constant progression is fractured, the sequence of differentials $(\Delta_n)_{n \in \mathbb{N}}$ is deterministically periodic. We exploit this determinism to mount a passive recovery attack.

\begin{definition}[Differential Signature] \label{def:signature}
For a target sequence window $W_n$ of unicity length $U$ starting at index $n$, the differential signature $\vec{\delta}_n$ is the sequence of cumulative polynomial differences relative to $X_n$:
\begin{equation*}
    \vec{\delta}_n = (0, X_{n+1} \ominus X_n, X_{n+2} \ominus X_n, \dots, X_{n+U-1} \ominus X_n)
\end{equation*}
\end{definition}

\begin{theorem}[Passive Key Recovery] \label{thm:passive}
Let $M$ be the number of observed keystream bits, and $U$ be the unicity distance such that $2^U \gg p^r$. An adversary can recover the secret key $K$ passively in expected $\mathcal{O}(U \cdot p^r/M)$ operations.
\end{theorem}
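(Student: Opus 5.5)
The plan adapts the Khovratovich/Beullens table-collision attack, using the differential signatures of Definition \ref{def:signature} in place of the constant step $\Delta_n = 1$. The structural fact comes from Lemma \ref{lemma:carry}: the window $X_n, \dots, X_{n+U-1}$ has signature $\vec{\delta}_n$ fixed entirely by the carry pattern of $n,\dots,n+U-1$. That pattern depends only on the low-order base-$p$ digits of $n$ (on $n \bmod p^{j}$, with $j$ the smallest index such that $p^{j} > U$). In the regime $U \ll p$, the cheap case, $\vec{\delta}_n$ takes at most $U+1$ distinct values. Either the window sits inside one arithmetic sub-block of length $p$, giving signature $(0,1,\dots,U-1)$. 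Or it crosses a block boundary at one of $U$ offsets, with an extra $\sum_{i\le k}x^i$ term added past the break.

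\textbf{Offline table.} For each signature class $\sigma$ that appears in the observed data, and for each of $\mathcal{O}(p^r/M)$ random candidates $A\in\mathbb{F}_{p^r}$, we compute the $U$-bit string $(L(A\oplus\sigma_0),\dots,L(A\oplus\sigma_{U-1}))$. Each string costs $U$ Legendre evaluations, and we store it in a hash table bucketed by $\sigma$.

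\textbf{Online phase.} We slide a window over the $M$ keystream bits. Since the counter $n$ is public in counter mode, we know $\vec{\delta}_n$ for each window and look its bits up in bucket $\vec{\delta}_n$. A hit gives $A = X_n \oplus K$, so $K = A \ominus X_n$. The definition of $U$, namely $2^U \gg p^r$, makes spurious matches unlikely. We check each candidate on a few fresh bits.

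\textbf{Cost.} Homomorphic shift invariance under addition does the key work: $L_K(X_{n+t}) = L(X_n\oplus K \oplus \delta_{n,t})$, so each window is a sample of the table's function at the unknown point $X_n\oplus K$. If the window starting points in a bucket $\sigma$ are effectively uniform in $\mathbb{F}_{p^r}$, a birthday argument applies. With $M_\sigma$ windows and $T_\sigma$ table entries, a collision appears once $M_\sigma T_\sigma \approx p^r$. Almost all windows ($M(1-U/p)$ of them) fall in the generic no-break bucket, so a single table of size $T \approx p^r/M$ suffices, at total cost $\mathcal{O}(U\cdot p^r/M)$ field operations plus $\mathcal{O}(M)$ lookups. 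Choosing $M \approx \sqrt{Up^r}$ balances the two phases; the stated bound counts table work.

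\textbf{Main obstacle.} The weak point is the uniformity of the unknown offsets $X_n\oplus K$ within a bucket. The $X_n$ are consecutive counters, not random, so the table points must be chosen randomly and independently to make $\Pr[\text{hit}]\approx M T/p^r$ rigorous. I would first argue that the events for distinct $n$ are pairwise distinct points, which is immediate since $\phi$ is injective under the Periodicity Constraint with $M<p^r$. I would then use the Statistical Indistinguishability theorem to bound false positives among windows at $\mathcal{O}(2^{-U}p^r)$. The second difficulty is the count of signature classes when $U \ge p$. There the number of classes grows to about $p^{\lceil\log_p U\rceil}$ and may break the bound. I would state explicitly that the argument assumes $U<p$, i.e.\ $r\log_2 p \lesssim U < p$, which is reasonable for large $p$.
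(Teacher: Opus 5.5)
Your argument is essentially sound when $U<p$, but it does not prove the theorem as stated. You add the hypothesis $U<p$ and leave $U\ge p$ open, while the statement has no such restriction. That case matters in small characteristic, e.g.\ $p=3$, where $U\approx r\log_2 p\gg p$. The reason you give for stopping is also not the relevant one. Your attack, like the paper's, uses a single bucket, so the number of signature classes is beside the point; only the population of the largest class matters. The paper's proof rests on exactly this. It takes the maximally populated class, with $M'\approx M/U$ targets, and gets the factor $U$ from the $p^r/M'$ sampled reference windows, each charged as one operation. You instead take $M'\approx M$ and charge $U$ evaluations per reference. The estimate $M'=\Omega(M/U)$ holds in every regime. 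Let $j=\lceil\log_p U\rceil$. The signature of the window at $n$ is fixed once $n \bmod p^{j-1}$ is fixed and $\{n+1,\dots,n+U-1\}$ contains no multiple of $p^j$. Choose the residue so that this range holds the minimal number $c=\lfloor (U-1)/p^{j-1}\rfloor\in[1,p-1]$ of multiples of $p^{j-1}$. This leaves $p-c$ admissible values of the digit $d_{j-1}$, so that class has density $(p-c)/p^j> c(p-c)/(pU)\ge (p-1)/(pU)$.

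If you keep your own cost model ($U$ Legendre evaluations per reference), this $M'$ only yields $\mathcal{O}(U^2p^r/M)$ for $U\ge p$, so you need one more idea. A natural one in your setting is to let windows wrap inside blocks. The inputs $X_m$ with $bp^j\le m<(b+1)p^j$ form exactly the coset $X_{bp^j}+V_j$, where $V_j$ is the $\mathbb{F}_p$-span of $1,x,\dots,x^{j-1}$. So fix a shape $S\subseteq V_j$ with $|S|=U$. For every $n$ in a fully observed block, all bits $L(X_n\oplus K\oplus s)$ with $s\in S$ appear in the keystream. This places $M-\mathcal{O}(pU)$ windows in one class for every $U$. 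Your table argument then gives $\mathcal{O}(Up^r/M)$ evaluations with no restriction on $U$.

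Three smaller points:
\begin{itemize}
\item The Statistical Indistinguishability theorem cannot deliver the false-positive bound $\mathcal{O}(2^{-U}p^r)$. For $2^U\gg p^r$ its $\mathcal{O}(\sqrt{p^r})$ error term swamps the main term $p^r/2^U<1$. The paper treats spurious matches only heuristically via the unicity distance; your fresh-bit check is what actually removes them.
\item For $U<p$ the break signatures also depend on the carry depth $k$, so there are up to $1+(U-1)(r-1)$ classes, not $U+1$. This is harmless, since you only use the no-break bucket.
\item Tabulating references instead of targets, the reverse of the paper, costs memory $\mathcal{O}(p^r/M)$ rather than $\mathcal{O}(M)$.
\end{itemize}
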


\begin{proof}
By Definition \ref{def:signature}, the target window $W_n$ evaluates the PRF as:
\begin{equation*}
    W_n = \left( L(X_n + K + \delta_0), \dots, L(X_n + K + \delta_{U-1}) \right)
\end{equation*}
Letting $Z = X_n + K$, the sequence reduces to $L(Z + \delta_i)$ for $0 \leq i < U$, which depends exclusively on the universal shape $\vec{\delta}_n$ and the scalar polynomial $Z$.

The number of unique differential signatures for a window of length $U$ is strictly bounded by $p^{\lceil \log_p U \rceil}$. The adversary extracts all $M - U + 1$ overlapping windows from the keystream and partitions them into equivalence classes based on their signature $\vec{\delta}$. 

The adversary selects the maximally populated equivalence class, expected to contain $M' \approx M/U$ sequence targets, and constructs a hash table $H$. The adversary then randomly samples $Z \in \mathbb{F}_{p^r}$, evaluates the reference window using the associated signature $\vec{\delta}_{max}$, and queries $H$. A collision occurs with probability $M'/p^r$, requiring $p^r/M'$ iterations. 

Upon collision, the table yields the target index $n$, and the key is recovered algebraically via $K = Z \ominus X_n \pmod p$. The total time complexity is bounded by $\mathcal{O}(p^r / (M/U)) = \mathcal{O}(U \cdot p^r/M)$.
\end{proof}

\section{Active Cryptanalysis via Geometric Sequences}
While Theorem \ref{thm:passive} establishes a passive bypass of the polynomial carry, the complexity suffers a linear penalty proportional to $U$. In this section, we formalize an active chosen-query attack that restores strict algebraic homomorphism, eliminating the fracture entirely.

\subsection{Multiplicative Homomorphism over $\mathbb{F}_{p^r}^*$}
To completely circumvent the additive limitations of $\mathbb{F}_p[x]$, an active adversary with oracle access to the server abandons polynomial addition. Instead, the adversary selects a primitive generator $g(x)$ of the multiplicative group $\mathbb{F}_{p^r}^*$ and queries the server with a geometric sequence $X_i = g^i \pmod{I(x), p}$ for $1 \leq i \leq M$.

\begin{lemma}[Factoring Identity of the Legendre PRF] \label{lemma:factor}
For a geometric query sequence $g^i$, the Legendre PRF evaluation factors into a secret-dependent constant and a universal, keyless reference sequence shifted by a discrete logarithm $j$.
\end{lemma}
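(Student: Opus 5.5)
The plan is to make the factorisation explicit by writing the key itself as a power of the chosen generator, and then use the multiplicativity of the generalized Legendre symbol $L$. First, I would dispose of the degenerate key $K = 0$. In that case $L(g^i + K) = L(g^i) = L(g)^i = (-1)^i$, because a primitive $g$ is a non-residue: $g^{(p^r-1)/2} = -1$, since $g$ has order $p^r - 1$, which is even. This is trivially a keyless sequence, with constant $1$ and shift $0$. For $K \neq 0$, since $g$ generates $\mathbb{F}_{p^r}^*$, there is a unique discrete logarithm $j \in \{0, \ldots, p^r - 2\}$ with $K = g^j \pmod{I(x), p}$. This $j$ is the secret shift in the statement.

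The key algebraic step is the identity in $\mathbb{F}_{p^r}$
\begin{equation*}
    g^i + K = g^i + g^j = g^j \otimes \left(1 + g^{\,i-j}\right),
\end{equation*}
where exponents are read modulo $p^r - 1$. Euler's criterion from Section 1.3 shows that $L(A) = A^{(p^r-1)/2}$ is a multiplicative map on $\mathbb{F}_{p^r}$. Therefore
\begin{equation*}
    L(g^i + K) = L(g^j)\, L\!\left(1 + g^{\,i-j}\right) = (-1)^j \cdot R(i - j), \qquad R(m) := L(1 + g^m).
\end{equation*}
The factor $(-1)^j = L(K)$ depends only on the secret and is constant in $i$. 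The sequence $R$ depends only on the public $g$ and the modulus $I(x)$, so the adversary can precompute it without the key. The observed keystream is thus a shift by $j$ of the universal reference sequence $R$, multiplied by a global sign. In the binary output domain, multiplication by $-1$ becomes XOR with the bit $j \bmod 2$. The adversary can therefore handle the constant by testing both polarities, or by working with the products of consecutive symbols, where the sign cancels.

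I expect the main obstacle to be the zero-valued points and the binary mapping, not the algebra. The term $1 + g^{i-j}$ vanishes exactly when $g^{i-j} = -1$, that is, when $i - j \equiv (p^r-1)/2$. At that single index both sides of the identity are $0$, so the identity still holds at the symbol level. However, under the paper's mapping $\{0,1\} \mapsto 0$, the sign-flip correspondence with XOR breaks at this one position. I would state the factorisation at the level of symbols in $\{-1, 0, 1\}$. I would then remark that in bits it holds for all but at most one index $i$ in any period, which is negligible for the subsequent collision attack. Finally, I would note that reducing $i - j$ modulo $p^r - 1$ is legitimate because $g^{p^r-1} = 1$, so $R$ is periodic with period $p^r - 1$ and the shift is well defined.
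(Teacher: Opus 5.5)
Your proof is correct and follows essentially the paper's argument: factor $K$ out of $g^i + K$, use the multiplicativity of $L$ (via Euler's criterion), and write the key as a power of $g$. You take $K = g^j$ and get shift $i-j$, whereas the paper takes $K^{-1} = g^j$ and gets $i+j$, which is only a sign convention; your extra care with $K=0$, with $L(K) = (-1)^j$, and with the single index $i \equiv j + (p^r-1)/2$ where the bit-level complement relation fails under the $\{0,1\}\mapsto 0$ mapping consists of valid refinements that the paper's proof silently skips.
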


\begin{proof}
The Legendre symbol generalized over $\mathbb{F}_{p^r}$ is completely multiplicative. Thus, the evaluation $L(g^i + K)$ can be factored as:
\begin{equation*}
    L(g^i + K) \equiv L\left( K \otimes (K^{-1}g^i + 1) \right) \equiv L(K) \cdot L(K^{-1}g^i + 1) \pmod{I(x), p}
\end{equation*}
Since $g(x)$ generates $\mathbb{F}_{p^r}^*$, there exists a unique integer $j$ such that $g^j \equiv K^{-1} \pmod{I(x), p}$. Substituting $g^j$ yields:
\begin{equation*}
    L(g^i + K) = L(K) \cdot L(g^{i+j} + 1)
\end{equation*}
The sequence is thereby algebraically isolated into an unknown shift $j$ and a sign coefficient $L(K) \in \{-1, +1\}$, operating over the universal reference $L(g^m + 1)$.
\end{proof}

\subsection{Geometric Table Collision Attack}
Leveraging Lemma \ref{lemma:factor}, we generalize the state-of-the-art table collision attack \cite{beullens2019} to operate over the multiplicative group of the extension field.

\begin{theorem}[Active Key Recovery] \label{thm:active}
Given $M$ chosen geometric queries, an adversary can extract the secret polynomial $K \in \mathbb{F}_{p^r}$ in purely $\mathcal{O}(p^r / M)$ operations.
\end{theorem}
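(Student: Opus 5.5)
The plan is to reduce to the table collision attack of Beullens et al.\ \cite{beullens2019} through Lemma \ref{lemma:factor}. Assume $K \neq 0$; the case $K=0$ can be ruled out beforehand by checking one query against the keyless sequence $L(g^i)$. Lemma \ref{lemma:factor} then gives, for every query index $1 \leq i \leq M$,
\begin{equation*}
    L(g^i + K) = L(K)\cdot R(i+j), \qquad R(m) := L(g^m + 1),
\end{equation*}
where $g^j = K^{-1}$. So the observed keystream is a contiguous window of length $M$ of the universal, keyless sequence $R$, possibly negated, starting at the unknown offset $j \in \{0,\dots,p^r-2\}$. Recovering $j$ recovers $K = g^{-j}$. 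The cost of computing this inverse power, about $r\log p$ field multiplications, is negligible.

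First I handle the unknown sign. I split into the two cases $L(K) = \pm 1$ and store both the observed window and its bitwise complement. Then I fix $U = \Theta(\log_2 p^r)$, a little larger so that $2^U \gg p^r$. This is the same unicity length as in Theorem \ref{thm:passive}, chosen so that with high probability a length-$U$ substring of $R$ determines its position. I insert all $M-U+1$ overlapping windows $(L(g^{i+t}+K))_{0\le t<U}$, together with their complements, into a hash table $H$, each keyed to its index $i$. The query side is where geometric queries pay off: because the differential is now a constant multiplicative step $g$, the table entries form a single coherent family. This avoids the bucketing by signature and the $1/U$ loss of Theorem \ref{thm:passive}.

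In the search phase the adversary samples random exponents $m$ and computes the reference window $(R(m+t))_{0\le t<U}$. Doing this incrementally, updating $g^{m+t}$ by one multiplication and one Legendre evaluation per new bit, costs only $O(1)$ fresh Legendre evaluations per step, as in \cite{beullens2019}. The adversary then looks the window up in $H$. A hit with entry $i$ gives the candidate $j = m - i \bmod (p^r-1)$ together with the sign. The candidate is verified by checking a few further queried bits, which discards false positives. A single random $m$ lands in one of the $\approx M$ stored positions $\{i+j\}$ with probability about $M/(p^r-1)$. So the expected number of iterations is $O(p^r/M)$, and the total cost is $O(p^r/M)$ Legendre-symbol evaluations plus $O(M)$ for building the table.

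The main obstacle is making the uniqueness claim for length-$U$ windows rigorous. I need that a random $U$-window of $R$ collides spuriously with a stored window with probability only $O(M p^r/2^U)$ in total. I would argue this with the Statistical Indistinguishability theorem. Its Weil-type bound via Lemma \ref{lemma:weil} is stated for additive shifts, so it must be adapted: $R(m+t) = L(g^t\,y + 1)$ with $y = g^m$, and the joint pattern count involves the product $\prod_t (g^t y + 1)$. This is squarefree in $y$ because the roots $-g^{-t}$ are distinct for $t < U < p^r-1$. That gives $p^r/2^U + O(U\sqrt{p^r})$ occurrences of each pattern, which is exactly the ingredient needed. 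Two smaller points also need care: the queries are only defined for $M < p^r - 1$, and the $O(1)$ bit updates per step are counted as unit operations. I would state both explicitly.
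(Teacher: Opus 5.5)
Your reduction is the paper's own: a hash table of the $U$-bit windows of the geometric keystream, random reference exponents for $R(m)=L(g^m+1)$, then $j\equiv m-i$ and $K=g^{-j}$. Storing complements instead of looking them up is equivalent, and your handling of $K=0$ and of candidate verification is fine.

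The genuine problem is the step you single out as the main obstacle. Your multiplicative Weil count $p^r/2^U+O(U\sqrt{p^r})$ is correct, since each $\prod_{t\in S}(g^t y+1)$ is squarefree. But it is not the ingredient uniqueness needs. In the regime $2^U\gg p^r$ the main term is below $1$, while the error term is of order $U\sqrt{p^r}$. So the estimate does not even exclude a stored pattern recurring $U\sqrt{p^r}$ times; it is informative only when $2^U\lesssim\sqrt{p^r}/U$. What it does yield, summed over the $2M$ stored windows and the expected $p^r/M$ independent guesses, is a total of $O(p^r/2^U+U\sqrt{p^r})$ false positives. Your verification step absorbs this into $O(p^r/M)$ only if $M\lesssim\sqrt{p^r}/U$. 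For larger $M$ you are back to the heuristic the paper simply asserts (choosing $2^U\gg p^r$ ``to mitigate spurious collisions''). So either restrict $M$ accordingly or state that heuristic explicitly as an assumption.

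A second, smaller issue is the cost accounting. The incremental update does not bring a guess down to $O(1)$ Legendre evaluations. Sliding shares bits only between overlapping windows, i.e.\ consecutive $m$. Consecutive guesses are not independent: from a random start you need $\Theta(p^r)$ steps to enter the block $\{1+j,\dots,M-U+1+j\}$ of stored offsets. Independent guesses share no bits, so each costs $U$ fresh evaluations. The honest count is therefore $O(p^r/M)$ table lookups but $O(U\,p^r/M)$ Legendre evaluations. The prime-field table attacks of \cite{khovratovich2019,beullens2019} likewise pay a logarithmic factor for this, so they do not support an $O(1)$ amortization. The paper's proof makes the same silent identification, so the theorem holds only when ``operation'' is read as one window computation plus lookup. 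You should state that reading explicitly rather than claim $O(p^r/M)$ Legendre evaluations.
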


\begin{proof}
The adversary intercepts the geometric keystream and populates a hash table $T$ mapping $U$-bit contiguous windows to their starting index $i$. To mitigate spurious collisions, $U$ is chosen such that $2^U \gg p^r$.

The adversary defines the universal reference function $L_0(m) = L(g^m + 1)$. By randomly sampling exponents $m \in \{1, \dots, p^r-1\}$, the adversary generates reference windows of length $U$. By Lemma \ref{lemma:factor}, the target sequence is a strict shift of the reference sequence, potentially inverted if $L(K) = -1$. Therefore, the adversary queries $T$ for both the generated reference window and its bitwise complement.

Because $T$ contains $M - U \approx M$ targets, a collision is guaranteed in expected $p^r/M$ guesses. Let the collision match target index $i$ and reference index $m$. By the factoring identity, the exponent relation is:
\begin{equation*}
    m \equiv i + j \pmod{p^r - 1} \implies j \equiv m - i \pmod{p^r - 1}
\end{equation*}
The secret polynomial $K$ is then extracted via the field inversion:
\begin{equation*}
    K \equiv (g^j)^{-1} \pmod{I(x), p}
\end{equation*}
This process requires exactly $\mathcal{O}(p^r/M)$ Legendre symbol evaluations, effectively reducing the active security of the single-degree PRF in $\mathbb{F}_{p^r}$ to its theoretical lower bound.
\end{proof}

\section{Cryptographic Boundaries and Future Work}
The cryptanalysis presented in Sections 3 and 4 demonstrates that the single-degree ($d=1$) Legendre PRF is fundamentally insecure over extension fields under both passive and active threat models. However, the original PRF framework \cite{khovratovich2019} allows for the use of higher-degree polynomials to mitigate algebraic weaknesses. As part of our future work, we intend to formally evaluate the security boundaries of these generalized constructions.

\subsection{The Higher-Degree Variant ($d \geq 2$)}
To defend against structural reduction, the secret key can be expanded from a single element into a set of $d$ independent polynomials: $K = (K_{d-1}, K_{d-2}, \dots, K_0)$. The PRF evaluates a monic polynomial of degree $d$:
\begin{equation*}
L_K(x) = L(x^d + K_{d-1}x^{d-1} + \cdots + K_1x + K_0)
\end{equation*}

For example, the degree-2 (quadratic) variant evaluates:
\begin{equation*}
L_K(x) = L(x^2 + K_1x + K_0)
\end{equation*}

When an adversary attempts to execute the active geometric attack (Section 4) by querying $X_i = g^i$, the evaluation becomes:

\begin{equation*}
L( (g^i)^2 + K_1(g^i) + K_0 )
\end{equation*}
Unlike the single-degree variant, it is mathematically impossible to factor this expression into a secret-dependent constant and a universal reference sequence. The presence of the $K_1x$ term inherently binds the sequence to the specific, unknown combination of $(K_1, K_0)$. Because the variables cannot be cleanly isolated, the $\mathcal{O}(p^r/M)$ table collision attack is effectively neutralized.

\subsection{Future Cryptanalysis and Security Bounds}

While the geometric sequence attack fails against $d \geq 2$, the algebraic structure of the extension field may still permit advanced reduction techniques. Future research will explore the application of Tschirnhaus transformations and ``completing the square'' methodologies to eliminate intermediate key coefficients.

Preliminary analysis suggests that by executing Meet-in-the-Middle (MITM) attacks over these algebraically reduced forms, an adversary can decrease the brute-force complexity from $\mathcal{O}(p^{dr})$ down to approximately $\mathcal{O}(p^{\lfloor d/2 \rfloor r})$. Consequently, while the degree-2 variant may still be vulnerable to optimized searches in $\mathcal{O}(p^r)$ time, deploying the Legendre PRF with degrees $d \geq 3$ or $4$ appears necessary to establish a robust, exponential security margin against structural reduction in $\mathbb{F}_{p^r}$.

\section*{Availability of Source Code}
The Python implementations for all cryptanalytic attacks described in this paper, including the randomized geometric table collision and the passive shape-bucketing optimization, are open-source and publicly available on GitHub at: 
\begin{center}
    \url{https://github.com/DA1729/lprf\_ext\_fields\_analysis}.
\end{center}

\section{Conclusion}
In this paper, we presented a comprehensive cryptanalysis of the single-degree Legendre Pseudorandom Function (PRF) operating over extension fields $\mathbb{F}_{p^r}$. We established that while the transition from prime fields to extension fields disrupts the continuity of standard additive counter queries—resulting in an asynchronous \enquote{no-carry fracture} that effectively thwarts classical sliding-window collision attacks—this structural boundary is fundamentally insufficient to guarantee cryptographic security.

Under a standard passive threat model, we demonstrated that the deterministic periodicity of polynomial input encoding allows an adversary to cluster fractured sequence windows by their differential signatures. By leveraging this novel shape-bucketing optimization, the adversary can bypass the additive fracture entirely and passively recover the secret key in $\mathcal{O}(U \cdot p^r/M)$ operations.

Furthermore, by elevating the threat model to a Chosen-Query Attack, we proved that an active adversary can completely circumvent the additive group limitations. By querying the PRF along a geometric sequence generated by a primitive polynomial, the adversary invokes strict multiplicative homomorphism over $\mathbb{F}_{p^r}^*$. This mathematical reduction factors the secret key into a pure exponent shift, permitting the direct generalization of the state-of-the-art table collision attack \cite{beullens2019} and extracting the key in precisely $\mathcal{O}(p^r/M)$ operations.

Ultimately, our findings formally break the $d=1$ Legendre PRF over extension fields under both passive and active threat models. To maintain an exponential security margin against structural and algebraic reduction, future deployments of the Legendre PRF in $\mathbb{F}_{p^r}$ for Multi-Party Computation and Zero-Knowledge applications must inherently rely on higher-degree key variants ($d \geq 2$) to mathematically bind the polynomial coefficients and resist generalized collision cryptanalysis.

\end{document}